\documentclass[letterpaper]{article}
\author{Ioana Ivan\thanks{MIT Computer Science and Artificial Intelligence Laboratory, ioanai@mit.edu.}
\and
Michael Mitzenmacher\thanks{Harvard University,
School of Engineering and Applied Sciences, michaelm@eecs.harvard.edu. This work was supported by NSF grants CCF-0915922 and IIS-0964473.}
\and
Justin Thaler\thanks{Harvard University,
School of Engineering and Applied Sciences,
jthaler@seas.harvard.edu. Supported by the Department of Defense (DoD) through the National Defense Science \& Engineering Graduate Fellowship (NDSEG) Program, and in part by NSF grants CCF-0915922 and IIS-0964473.}
\and
Henry Yuen\thanks{MIT Computer Science and Artificial Intelligence Laboratory, hyuen@csail.mit.edu. Supported by an MIT Presidential Fellowship.}
}
\date{}
\title{Continuous Time Channels with Interference}

\usepackage{amsfonts}
\usepackage{fullpage}
\usepackage{times}
\usepackage{verbatim}
\usepackage{amsmath}
\usepackage{amsthm}
\usepackage{ifsym}
\usepackage{graphicx}   
\usepackage{subfigure}  
\usepackage{fullpage}

\usepackage{algorithm}
\usepackage{algpseudocode} 

\newtheorem{theorem}{Theorem}[section]


\begin{document}
\maketitle
\begin{abstract}
Khanna and Sudan \cite{KS11} studied a natural model of continuous
time channels where signals are corrupted by the effects of both
noise and delay, and showed that, surprisingly, in some cases both
are not enough to prevent such channels from achieving
unbounded capacity. Inspired by their work, we consider 
channels that model continuous time communication with adversarial
delay errors. The sender is allowed to subdivide
time into an arbitrarily large number $M$ of micro-units in which binary
symbols may be sent, but the symbols are subject to unpredictable
delays and may interfere with each other.  We 
model interference by having symbols that land in the same
micro-unit of time be summed, and we study $k$-interference channels, which allow
receivers to distinguish sums up to the value $k$.  We consider both
a channel adversary that has a limit on the maximum number of steps it can delay 
each symbol, and a more powerful adversary that only has a bound on the
average delay.

We give precise characterizations of the threshold between finite and
infinite capacity depending on the interference behavior and on the
type of channel adversary: for max-bounded delay, the threshold is
at $D_{\text{max}}=\Theta\left(M \log\left(\min\{k,
M\}\right)\right)$, and for average bounded delay the threshold is
at $D_{\text{avg}} = \Theta\left(\sqrt{M \min\{k,
  M\}}\right)$. 
\end{abstract}
\vspace{-1mm}
\section{Introduction}
\label{sec:intro}
We study continuous time channels with adversarial
delay errors in the presence of interference.  Our models are inspired
by recent work of Khanna and Sudan \cite{KS11}, who studied
continuous-time channels in the presence of both delay errors and
(signal) noise errors.  In this model, the communicating parties can
subdivide time as finely as they wish.  In each subdivided unit of
time a 0 or 1 can be sent, but the sent signals are subject to
unpredictable \emph{delays}.  Khanna and Sudan 
found (suprisingly) that the channel capacity in their
model is finitely bounded only if at least one of the two
sources of error (delay or signal noise) is adversarial. However,
they assumed that at any instant in time, the receiver
observes the \emph{sum} of the signals delivered.

In this paper, we observe that the behavior of the channel changes
dramatically if one accounts for the possibility of interference, 
and that this holds even in the
absence of signal noise.  Our model of interference is very simple;
the symbols received at each time unit are summed, and the receiver sees the exact sum
if it is less than $k$, but values greater than $k$ cannot be distinguished from each other.  

At a high level, our results are two-fold. First, we show that delay
errors in the presence of interference are surprisingly powerful.
Second, in the context of delay errors with
interference, we find that seemingly innocuous modeling decisions can
have large effects on channel behavior.

\medskip
\vspace{-1mm}
\noindent \textbf{Related Work.}
Typically a communication channel is modeled as follows. The channel takes as input a signal $f$, modeled as a function from some domain $\mathcal{T}$ to some range $\mathcal{R}$,
and the channel outputs a received signal $\tilde{f}: \mathcal{T} \rightarrow \mathcal{R}$, which is a noisy version of $f$. For discrete time channels, $\mathcal{T}$ is a finite domain $\{0, \dots, T-1\}$ where
$T$ is the time duration, and for continuous-time channels, $\mathcal{T}$ is a continuous domain such as the interval $[0, T]$. For discrete signal channels, $\mathcal{R}$ is
a finite set such as $\{0, 1\}$, and for continuous signal channels, $\mathcal{R}$ is an infinite set such as the interval $[0, 1]$.

Shannon showed in the discrete time setting, the capacity of the channel is finite even if the signal is continuous, as long as there is signal noise \cite{Shannon49}. 
Nyquist \cite{Nyquist24} and Hartley \cite{Hartley28} showed that even in the continuous time setting, the capacity is finite if one places certain restrictions on the Fourier spectrum of the signal. 

Most relevant to us is recent work by Khanna and Sudan \cite{KS11},
which introduced continuous-time channels with signal noise and delay
errors.  They modeled their channel as the limit of a discrete process,
and found that the capacity of their channel is infinite
unless at least one of the error sources is adversarial.


Our work differs from previous work in several ways.  We consider
channels which introduce delays adversarially, but we additionally
consider a very simple model of interference.  We also consider two
limitations on the adversary: one where maximum delay for any symbol
is bounded, and one where the average delay over all symbols is
bounded.  In both cases, we find that our channels display a clean
threshold behavior.

We believe that the adversarial setting presented here offers a clean initial analysis of the interference model, 
already with surprising results. A next natural step would be to analyze 
the effect of \emph{random} delays in the presence of interference, and we leave this question as an interesting
direction for future work.

\section{Model and Summary of Results}
\noindent \textit{Modeling Time.} Following \cite{KS11}, we model continuous time as the limit of a discrete process.
More specifically, the sender and receiver may send messages that last
a duration of $T$ units of time, but also can divide every unit of
time into $M$ subintervals called \emph{micro-intervals}, and the
sender may send one bit per micro-interval. We refer to $M$ as the
\emph{granularity of time}, and refer to a sequence of $M$
micro-intervals as a \emph{macro-interval}. We call $T$ the
\emph{message duration} of the channel.  A codeword $c$ sent over
the channel is therefore represented as $c\in\{0,1\}^{MT}$.

\medskip \noindent 
\textit{Modeling Delays.} The effect of the channel on a
sent codeword $c\in\{0,1\}^{MT}$ is to delay symbols of $c$ by some
amount, e.g. the $i$th symbol of $c$ may be moved to the $j$th
timestep of the received codeword, where $j \geq i$. The delay process
is adversarial, where we assume that the adversary knows the
encoding/decoding scheme of the sender and receiver, and both the
symbols that get delayed and the amount they are delayed can depend on
the codeword that is sent.  We formalize the notions of
\emph{max-bounded delay} and \emph{average-bounded delay} below. 

\medskip \noindent 
\textit{Modeling Interference.}  If multiple symbols are
delivered at the same time step, there are several natural ways the
channel could behave. In \cite{KS11}, the receiver observes the sum of
all bits delivered at that instant of time;  we call this the \emph{sum channel}.
Another obvious
choice is for the receiver to see the {\sf OR} of all bits delivered
at that instant in time; we call this the {\sf OR} channel.

We generalize these two models to what we call the
\emph{$k$-interference} channel.  If there are fewer than $k$ 1s
delivered at an instant in time, the receiver will see the exact
number of 1s delivered; otherwise the receiver will only see that at
least $k$ 1s have arrived. Thus, the sum channel can be viewed as the
$\infty$-interference channel, and the OR channel as the
$1$-interference channel. We consider $k$-interference channels as $k$
varies between the extremes of $1$ and $\infty$, and may depend on the
granularity of time $M$. We call the parameter $k$ the \emph{collision
  resolution} of the channel.


\medskip \noindent 
\textit{Valid Codebooks.} For any fixed channel and
codeword $c$, we let $B(c)$ denote the set of possible received
strings corresponding to $c$. For any time $T$, we say a codebook $C
\subseteq \{0, 1\}^{MT}$ is \emph{valid} for a channel if for any $c
\neq c'$ in $C$, $B(c) \cap B(c') = \emptyset$. Informally, this means
that the adversary cannot cause the decoder to confuse $c$ with $c'$
for any other codeword $c'$.

\medskip \noindent 
\textit{Rate and Capacity.} For any fixed granularity of time $M$ and
time $T$, let $s_{M, T} := \log |C(M, T)|$, where $|C(M, T)|$ denotes
the size of largest valid codebook $C(M, T) \subseteq \{0,1\}^{MT}$
for the channel. The \emph{capacity of the channel at granularity $M$}
is defined as $R(M) = \limsup_{T \rightarrow \infty} \{s_{M, T}/T\}.$
The \emph{capacity} of the channel is defined as $\limsup_{M
  \rightarrow \infty} R(M) = \limsup_{M \rightarrow \infty} \left [
  \limsup_{T \rightarrow \infty} \{s_{M, T}/T\} \right ]$. We stress
that the order of the limits in the definition of the channel capacity
is crucial, as we show in Section \ref{sec:extensions}.



\medskip
\noindent \textit{Encoding:} For every $T$ and $M$, the sender encodes $s_{T, M}$ bits as $MT$ bits by applying an encoding function $E_T : \{0, 1\}^{s_{T, M}} \rightarrow \{0,1\}^{MT}$. The encoded sequence is denoted $X_1, \dots, X_{MT}$.

\medskip \noindent 
\textit{Delay:} The delay is modeled by a delay function $\Delta: [MT] \rightarrow \mathbb{Z}^{\geq 0}$, where $\mathbb{Z}^{\geq 0}$ denotes the non-negative integers. The delay function has to satisfy a constraint depending on the type of delay channel we have:

\begin{itemize}
	\item \textit{Max-bounded delay}: For all $i\in[MT]$, $\Delta(i) \leq D_{\text{max}}$, where $D_{\text{max}}$ is the bound on the maximum delay.
	\item \textit{Average-bounded delay}: $\sum_i \Delta(i) \leq MT\cdot D_{\text{avg}}$, where $D_{\text{avg}}$ is the bound on the average delay.
\end{itemize}

\medskip \noindent 
\textit{Received Sequence}. The final sequence seen by the receiver given delay $\Delta$, is $Y_1, \dots, Y_{MT}\in \mathbb{Z}^{\geq 0}$, where $Y_i := \min\{k, \sum_{j \leq i \text{ s.t. } j + \Delta(j) = i} X_j\}$ and $k$ is the collision resolution parameter of the channel. We will ignore the symbols that get delayed past timestep $MT$.

For brevity, we use the shorthand $\mbox{AVG-}k$ channel
and $\mbox{MAX-}k$ channel, where the meaning is clear.

\subsection{Summary of Results}

We prove that in the case of max-bounded delay, the capacity is
finite if $D_{\text{max}}=\Omega\left(M \log\left(\min\{k,
M\}\right)\right)$, and infinite otherwise. 
In contrast, we prove that in the
case of average-bounded delay, the capacity is finite if
$D_{\text{avg}}=\Omega\left(\sqrt{M \cdot \min\{k, M\}}\right)$, and
infinite otherwise. 

We also consider a number of variant channels and observe that
seemingly innocuous modeling choices cause the behavior to change
drastically.  In particular, we consider settings where the granularity of time is
allowed to grow with the message duration, and where adversarial signal noise can 
also be added.  For brevity, we provide a few specific interesting results. 

\section{Max-Bounded Delay Channel}
We give a precise characterization of the infinite/finite capacity threshold of the $\mbox{MAX-}k$ channel.  Here and throughout, $k$ refers to the collision
resolution paramater, and $M$ to the granularity of time.
\begin{theorem} \label{thm:max}
If $D_{\text{max}}$ is the max-delay bound for the $\mbox{MAX-}k$ channel, then
then the capacity of the channel is infinite when $D_{\text{max}} = o\left(M \log\left(\min\{k, M\}\right)\right)$,
and the capacity is finite when $D_{\text{max}} =\Omega\left(M \log\left(\min\{k, M\}\right)\right)$.
\end{theorem}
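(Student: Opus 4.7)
The plan is to prove the two halves of the threshold separately; throughout, set $L := D_{\max}+1$.

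For the finite-capacity direction ($D_{\max} = \Omega(M \log \min\{k,M\})$), I would exhibit a single explicit adversary strategy and argue by counting its images. Partition the $MT$ micro-intervals into consecutive super-blocks of length $L$. Given any codeword $c$, the adversary delays each bit at position $qL+r$, with $r \in \{0, 1, \ldots, L-1\}$, by exactly $D_{\max}-r$; this delay lies in $[0,D_{\max}]$ and moves every $1$ of super-block $q$ to that super-block's last slot. The received string $y$ is therefore zero outside these ``funnel'' slots, and at the funnel slot of super-block $q$ takes the value $\min\{k, w_q\}$, where $w_q$ is the number of $1$s of $c$ inside super-block $q$. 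Hence $y$ depends only on the profile $(\min\{k,w_q\})_q$, of which there are at most $(\min\{k, L\}+1)^{MT/L}$. Because this $y$ lies in $B(c)$ for every $c$, if $c\neq c'$ map to the same $y$ then $B(c) \cap B(c') \neq \emptyset$ and the codebook cannot be valid; so any valid $C$ satisfies $\log |C| \le (MT/L)\log(\min\{k, L\}+1)$. A short case split on $k \le M$ vs.\ $k > M$ then confirms that for $L = \Theta(M \log \min\{k, M\})$ the resulting $R(M) = \log|C(M,T)|/T$ is $O(1)$, making the capacity finite.

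For the infinite-capacity direction ($D_{\max} = o(M \log \min\{k, M\})$), the plan is a direct code construction. Again let $L := D_{\max}+1$, set $S := 2L$, and partition each macro-interval into length-$S$ windows, designating the first $L$ slots as a payload region and the last $L$ as a buffer. In each window independently, encode one of $V := \min\{k, L+1\}$ symbols $v \in \{0, 1, \ldots, V-1\}$ by placing $v$ consecutive $1$s at the start of the payload. Since every bit is delayed by at most $D_{\max} = L-1$ slots, every payload $1$ of window $i$ stays within that window's $2L$ slots, so $1$s from different windows never collide. Because $v \le V-1 \le k-1$, the cap at $k$ is never activated, so the decoder recovers $v$ exactly by summing the received string over window $i$. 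This gives a codebook of per-macro-interval rate $(M/S)\log V$; a case split on $k \le M$ vs.\ $k > M$ (with a sub-split on whether $D_{\max} < k$ or $D_{\max} \ge k$) shows this quantity tends to $\infty$ as $M \to \infty$ whenever $D_{\max} = o(M \log \min\{k, M\})$, so the capacity is infinite.

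The main obstacle I anticipate is not the combinatorial construction — the funneling adversary and the windowed code are both elementary — but rather verifying that the expression $(M/L)\log(\min\{k, L\}+1)$ crosses from $\Theta(1)$ to $\omega(1)$ precisely at $L = \Theta(M \log \min\{k, M\})$. The delicate regime is $M \le k \le M \log M$, where both the interference cap $k$ and the super-block length $L$ simultaneously shape the bound, and where picking $L$ vs.\ $k$ as the effective alphabet size matters. Once these sub-cases are checked, the matching upper and lower bounds pin the threshold at $M \log \min\{k,M\}$.
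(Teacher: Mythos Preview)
Your proposal is correct and follows essentially the same approach as the paper: the finite-capacity direction uses the identical ``funnel everything to the last slot of each length-$D_{\max}$ block'' adversary and counts the resulting $(\min\{k,L\}+1)$-ary profiles, while the infinite-capacity direction uses the same windowed unary encoding (place $v\le k$ ones at the start of a window of length $\Theta(D_{\max})$ and decode by summing). The only cosmetic difference is that the paper disposes of the small-delay sub-case by assuming WLOG that $D_{\max}\ge M\ge k$ (arguing smaller delays only help the sender), whereas you handle it explicitly via $V=\min\{k,L+1\}$ and a case split.
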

\begin{proof}
\textbf{Infinite capacity regime.} Suppose  $D_{\text{max}} = c M \log\left(\min\{k, M\}\right)$ for $c=o(1)$ 
(here, $c$ denotes a function of $M$ that is subconstant in $M$).

Assume for simplicity
that $1/c$ is an integer. Also assume that $c \geq \frac{1}{\log k}$ and $k \leq M$, as smaller values of $c$ and larger values of $k$ only make communication easier. 
We give a valid codebook of size $s=2^{T/2c}$, 
showing $R(M) = \omega(1)$, and thus the capacity is infinite.
Given a message $x \in \{0, 1\}^{T/2c}$, the sender breaks the message $x$ into \emph{blocks} of length $\log k$.
The sender then encodes each block independently, using $2cM \log k$ bits for each block as described below. The resulting codeword has length $\frac{T}{2c \log k} \cdot 2cM\log k = TM$ as desired.


A block is encoded as follows. Since each block is $\log k$ bits long, we interpret the block as an integer $y$, $1 \leq y \leq k$.
The sender encodes the block as a string of $2cM \log k$ bits, where the first $y \leq k$ bits in the string are 1s, and all remaining bits are 0s.
To decode the $j$'th block
of the sent message, the receiver simply looks at the $j$'th set of $2cM \log k$ bits
in the received string, and decodes the block to the binary representation of $y$, where $y$ is
the total count of 1s received in those $2cM \log k$ bits.

Since the maximum delay is bounded by $cM \log k$, and 1s only occur as the first $k \leq M \leq cM \log k$ locations of each sent block, any 1-bit must be delivered within its block. Furthermore, the count of 1 bits is preserved, because at most $k$ 1 bits collide within a block. Correctness
of the decoding algorithm follows.


\medskip
\noindent \textbf{Finite Capacity Regime.}
Suppose the delays have bounded maximum $D_{\text{max}} = cM
\log\left(\min\{k, M\}\right)$, with $c = \Omega(1)$.
We give an adversary who ensures that there at most $O(\log k)$ bits of
information are transmitted every $c\log k$ macro-timesteps. Thus, for
$c = \Omega(1)$, the rate is bounded above by $O(\frac{1}{c}) = O(1)$ for
all values of $M$, and hence the capacity is finite.

Assume first that $k \leq M$.
The adversary breaks the sent string into blocks of length $D_{\text{max}}$,
and delays every sent symbol to the end of its block. The adversary clearly never introduces a delay longer than $D_{\text{max}}$  micro-timesteps. Each received block can only take $k+1$ values: all bits of the received block will be 0, except for the last symbol which can take any integer value between $0$ and $k$. Thus, only $O(\log k)$ bits of information are transmitted every $D_{\text{max}}=cM\log k$ micro-timesteps, or $c \log k$ macro-timesteps, demonstrating finite capacity.

If $k > M$, then the adversary is the same as above, where the block size is $D_{\text{max}} = c M \log M$. Each received block can only take one of $cM \log M+1$ values, since all bits of the block are 0, except for the last symbol which may vary between $0$ and $c M \log M$. Thus, only $\log(c M \log M) = O(c\log M)$ bits of information are transmitted every $c \log M$ macro-timesteps, completing the proof.
\end{proof}

\section{Average-Bounded Delay Channel}
We now study the behavior of the $\mbox{AVG-}k$ channel.
%
\begin{theorem} \label{thm:avg}
If $D_{\text{avg}}$ is the average-delay bound for the $\mbox{AVG-}k$ channel, then
then the capacity of the channel is infinite when $D_{\text{avg}} = o(\sqrt{M \min\{k, M\}})$,
and the capacity is finite when $D_{\text{avg}} = \Omega(\sqrt{M \min\{k, M\}})$.
\end{theorem}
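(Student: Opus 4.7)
The proof has two parts. Let $\kappa = \min\{k, M\}$ throughout, and assume $\kappa \geq 2$ (the $\kappa = 1$ case is degenerate).

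For the infinite-capacity direction, where $D_{\text{avg}} = o(\sqrt{M\kappa})$, I would adapt the unary-block construction from Theorem 3.1. Partition time into inner blocks of length $B = \Theta(\sqrt{M\kappa})$ and encode each block by placing $y$ ones at its start, where $y \in \{0, \ldots, \kappa\}$. Each block carries $\log(\kappa+1)$ bits, giving inner rate $\Theta(M\log\kappa / B) = \Theta(\sqrt{M/\kappa}\,\log\kappa) = \omega(1)$ per macro-interval. Since $B = \omega(\kappa)$, any $1$ that the adversary pushes across a block boundary costs at least $B - \kappa = \Omega(B)$ delay, so the total number of such boundary crossings is $O(MT D_{\text{avg}}/B)$. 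I would then concatenate with an outer Reed--Solomon code over the alphabet $\{0, \ldots, \kappa\}$ of constant rate $\rho < 1$ to correct the block-count errors these crossings cause.

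For the finite-capacity direction, where $D_{\text{avg}} = \Omega(\sqrt{M\kappa})$, I would design a compressing adversary. Partition time into windows of length $S = \Theta(\sqrt{M\kappa}) = \Theta(D_{\text{avg}})$, and within each window delay every sent symbol to land at the last slot. Each symbol moves at most $S$ positions, so the average delay per symbol is $O(S) = O(D_{\text{avg}})$, respecting the budget. After compression, the receiver sees an informative value only at window endpoints, each in $\{0, \ldots, \kappa\}$ thanks to $k$-interference, so the number of distinct received strings over all codewords is at most $(\kappa+1)^{MT/S}$, which bounds $R(M)$ from above.

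The main obstacle is matching both directions tightly to the $\sqrt{M\kappa}$ threshold. On the infinite side, the naive bound above gives a corrupted-block fraction of order $D_{\text{avg}}$, which exceeds the outer-code correction radius once $D_{\text{avg}}$ becomes super-constant; a more careful accounting is needed, for instance exploiting that shifts spanning more than one block pay proportionally more delay, or that most per-block perturbations are of small magnitude relative to the spacing of codeword values (so that list decoding or a slightly coarsened inner alphabet recovers them). On the finite side, the simple compression bound yields rate $\Theta(\sqrt{M/\kappa}\,\log\kappa)$ per macro-interval, which is still $\omega(1)$ for some regimes; the sharper adversary argument probably layers $k$-saturation on top of compression, so that codewords differing only in their ``excess'' mass above $\kappa$ per window genuinely collapse to the same received string rather than merely sharing per-window counts.
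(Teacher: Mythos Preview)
Your proposal identifies the right high-level framework---a concatenated code for the infinite-capacity direction and a compressing adversary for the finite-capacity direction---but both halves have genuine gaps that you yourself flag, and the paper closes them with ideas structurally different from what you outline.

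On the infinite-capacity side, your unary inner code with $y\le\kappa$ ones in a block of length $B=\Theta(\sqrt{M\kappa})$ is too fragile: a single boundary crossing, costing only $\Theta(B)$ delay, changes a block's count and hence corrupts it, so the corrupted-block fraction is $\Theta(D_{\text{avg}})$, useless once $D_{\text{avg}}$ is super-constant. The paper's fix is not a more careful accounting of your scheme but a different inner code. Writing $D_{\text{avg}}=c\sqrt{Mk}$ with $c=o(1)$, it encodes \emph{one bit} per block of length $2\ell=2cM$: a $1$-bit is $\ell$ ones followed by $\ell$ zeros, a $0$-bit is all zeros, and the decoder thresholds the block's one-count at $\sqrt{\ell k}$. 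The point is that corrupting a $1$-block now costs $\Omega(\ell^{3/2}\sqrt{k})$, not $\Omega(\ell)$: to drive the visible count below $\sqrt{\ell k}$ by collisions, the adversary must funnel the $\ell$ ones into at most $\sqrt{\ell/k}$ collision points (each saturates at $k$), and filling each point costs $\Omega(\ell k)$ delay; eviction is no cheaper. This super-linear cost in $\ell$ is precisely what makes the corrupted fraction come out to $O(\sqrt{c})=o(1)$.

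On the finite-capacity side, your compression leaves a value in $\{0,\ldots,\kappa\}$ at each of $MT/S$ window endpoints, i.e.\ $\Theta(\sqrt{M/\kappa}\,\log\kappa)$ bits per macro-interval, which is unbounded whenever $k=o(M)$. The paper's adversary is not a sharper compression but a \emph{stateful} one: it works at block length $\Theta(M)$ (not $\sqrt{Mk}$) and maintains a bank of $\Theta(D_{\text{avg}})$ deferred $1$s carried forward from earlier blocks. The bank lets the adversary \emph{inject} ones as well as pile them up, so it can force every received block into one of exactly three fixed patterns (all zeros; all zeros except the last index equal to $k$; all zeros except every $D_{\text{avg}}$-th index equal to $k$). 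That injection capability is what collapses your $(\kappa+1)$ possibilities per window down to $O(1)$ per macro-interval and yields a genuinely bounded rate.
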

\begin{proof}
\textbf{Infinite capacity regime.} Suppose $D_{\text{avg}} = c\sqrt{Mk}$, where $c=o(1)$ (that is, again, $c$ is a function of $M$ that is subconstant in $M$). Let $T$ be the message duration. Assume without loss of generality
that $c \geq \frac{1}{Mk}$ and $k \leq M$ (smaller values of $c$ and larger values of $k$ only make communication easier). 


Suppose the sender wants to send a message $x\in\{0,1\}^{s_{T,M}}$
with $s_{T,M} = T/c$. As in \cite[Lemma 4.1]{KS11}, we use a \emph{concatenated
code}: we assume that $x$ has already been encoded under a classical
error-correcting code $C$ that corrects a $1/5$-fraction of
adversarial errors (or any other constant less than $1/4$), as this
will only affect the rate achieved by our scheme by a constant factor.
$C$ is then concatenated with the following inner code, which is
tailored for resilience against delay errors: each bit of $x$ gets
encoded into a block of length $2\ell = 2cM$: 0's map to $2\ell$ 0's
(called a $0$-block), and 1's map to $\ell$ 1's followed by $\ell$ 0's
(called a $1$-block). The resulting codeword is thus $MT$ symbols long
as required.

For decoding, let $Y=Y_1,\ldots,Y_{MT}$ be the received word.  The receiver divides $Y$ into blocks of length $\ell$. Let $\gamma(i) = \sum_{j\in[i MT,\ldots,(i+1)MT - 1]} Y_j$ denote the number of $1$s encountered in the $i$th block. The receiver decodes $Y$ as a message $y\in\{0,1\}^{s_{T,M}}$ where $y_i$ is declared to be $1$ if $\gamma(i) \geq \sqrt{\ell k}$, $0$ otherwise. Notice $\sqrt{\ell k} \geq 1$. Finally, the receiver will decode $y$ using the outer decoder to obtain the original message. By the error-correcting properties of the outer code $C$, it suffices to show that at least $4/5$ths of the inner-code blocks get decoded correctly.

We use a potential argument to demonstrate that the adversary can afford to corrupt a vanishingly small fraction of the blocks. We maintain a potential function $\Phi(i)$ that measures the total amount of delay the adversary can apply after performing the $i$th action (where an action is delaying a single symbol some distance). Initially, $\Phi(0) = MTD_{\text{avg}}$. 

Turning a 0-block into a 1-block requires the adversary to delay at least $\sqrt{\ell k}$ $1$ symbols from some previous block at least a distance $\ell/2$, so this requires reducing $\Phi$ by $\Omega(\ell^{3/2} \sqrt{k})$.  To turn a 1-block into a 0-block, the adversary can either 1) move $1$ symbols out of the 1-block (\emph{evicting} 1s), or 2) \emph{collide} $1$s within the 1-block, or 3) a combination of both. We show that any combination requires reducing $\Phi$ by $\Omega(\ell^{3/2} \sqrt{k})$ as well.

Suppose the adversary chooses to corrupt a 1-block by evicting $\delta$ 1 symbols, and colliding the remaining $1$ symbols so that at most $\sqrt{\ell k}$ 1s remain. The adversary minimizes the amount of delays it spends to do this by evicting the last $\delta$ $1$s from a block, and choosing $\alpha$ equally spaced ``collision points'' (CPs) within the remaining $1$s, where each remaining $1$ symbol is delayed to the nearest CP ahead of it. Evicting $\delta$ 1s out of the block requires the adversary to spend at least $\delta\ell$ delays. Each CP receives $(\ell - \delta)/\alpha$ 1 symbols, and the amount of delays spent per CP is $1 + 2 + \cdots + (\ell - \delta)/\alpha = \Theta\left(\frac{(\ell - \delta)^2}{\alpha^2}\right)$. Thus, the total amount of delay spent by the adversary to corrupt the 1-block is $\Omega\left(\frac{(\ell - \delta)^2}{\alpha} + \delta \ell \right)$. This is minimized when $\delta = 0$, i.e. when no symbols are evicted. Since $\alpha k \leq \sqrt{\ell k}$ (because each CP will have value $k$ in the received string if at least $k$ 1s are delivered at that index), the adversary needs to use $\Omega(\ell^{3/2}\sqrt{k})$ units of potential in order to corrupt a 1-block.

In our analysis, the minimum potential reduction $\Omega(\ell^{3/2} \sqrt{k})$ accounts for corrupting at most a block and its adjacent neighbor. Thus, the maximum number of blocks corruptable is $2 \Phi(0)/\Omega(\ell^{3/2} \sqrt{k}) = O(c (M/\ell)^{3/2} T)=O(T/\sqrt{c})$.
Since the original codeword had a total of $T/c$ blocks, the maximum fraction
of blocks corruptable is $O(\sqrt{c})$. However, $c = o(1)$, so a vanishingly small fraction of blocks are corrupted, 
and the original message can be recovered. Thus, we have constructed a valid codebook of size $2^{\Omega(T/c)}$, and this implies that the capacity is infinite. 

\medskip \noindent \textbf{Finite capacity regime.} Suppose the delays have bounded average $D_{\text{avg}} = c\sqrt{M \min\{k, M\}}$, for some constant $c$.
We will assume for simplicity that $c=1$ and $k \leq M$, and explain how to handle smaller values of $c$ and larger values of $k$ later.
We show the capacity is finite by
specifying an adversary who ensures that there are a constant number of possible received strings
for almost every macro-timestep.

To accomplish this, the adversary will break the sent string into blocks of length $M$. It scans the blocks sequentially, and adds and removes 1s so that each block will have 1s only at indices that are multiples of $D_{\text{avg}}$, or at the very last index of the block. The adversary ensures that it can always add 1s when it needs to by
maintaining a ``bank'' of delayed 1s from previous blocks that will have size between $D_{\text{avg}}$ and $2D_\text{avg}$ 1s whenever possible.
The bank will always be small enough so that it does not contribute too many delays to the average. 
Once the bank reaches size $D_{\text{avg}}$, its size never falls below this level again.
We show that the amount of information transmitted before the bank reaches
this size is negligible for large $T$.



The adversary considers each block in turn, and its actions falls into four cases. Let $\ell$ denote the number of 1s in the block, and let $s$ 
denote the size of the bank at the start of the block.

\begin{enumerate}
\item If $\ell \leq D_{\text{avg}}$ (we call the block \emph{light}):
	\begin{enumerate}
	\item If $s \geq D_{\text{avg}} + k - \ell$, the adversary will delay all 1s in the block until the final index within the block. 
	If $\ell < k$, the 
	adversary will also deliver $k-\ell$ 1s from the bank at the final index to ensure that the value of the final index is $k$. When this step completes, 
	the size of the bank will be between $D_{\text{avg}}$ and $s$ .
	\item If $s < D_{\text{avg}} + k - \ell$, the adversary adds all 1s in the block to the bank, ensuring that the received block consists entirely of 0s.
	When this step completes, the bank has size least $s$ and at most $D_{\text{avg}} + k \leq 2 D_{\text{avg}}$.
	\end{enumerate}
	
\item If $\ell > D_{\text{avg}}$ (we call the block \emph{heavy}):
	\begin{enumerate}
	\item  If $s\leq D_{\text{avg}}$, the adversary adds $D_{\text{avg}}-s < \ell$ of the new 1s to the bank, and it delays the rest of the 1s
to the nearest integer multiple of $D_{\text{avg}}$. 
	\item Otherwise, $s$ will be at least $D_{\text{avg}}$. The adversary will place $k$ 1s at every location which is an integer
multiple of $D_{\text{avg}}$ using bits from its bank (this requires at most $k M/D_{\text{avg}} = k M/\sqrt{kM} = D_{\text{avg}}$ bits), and delays the first
$\ell - D_{\text{avg}}$ 1s within the current block to the nearest integer multiple of $D_{\text{avg}}$.
The last $D_{\text{avg}}$ 1s get added to the bank to replace the 1s lost from
the bank, so the bank stays at size $s$.
	\end{enumerate}
\end{enumerate}

We argue that at most $\sqrt{Mk} \log k+ O(T)$ bits of information are
transmitted over $T$ blocks by the above scheme.  Once the bank
reaches size $D_{\text{avg}}$, there are only three possible values
for each received block: the all-zeros vector; the vector that is all
0s except for the final index which has value $k$; and the vector that
is all 0s except for indices which are integer multiples of
$D_{\text{avg}}$, which have value exactly $k$.  Before the bank
reaches size $D_{\text{avg}}$, any light block is still received as
either the first or second possibility just described.  
Finally, at most one heavy block is encountered before the bank reaches size $D_{\text{avg}}$, and this block
can take on at most $k^{M/D_{\text{avg}}} \leq k^{\sqrt{M k}}$ possible values. 
Thus, over all
$T$ blocks, at most $\sqrt{Mk} \log k+ O(T)$ bits of information are
transmitted,
and hence the capacity is finite.


Finally, we bound the average delay incurred by the adversary. For
each block, we separately bound the total delays incurred by the
symbols banked at the beginning of the block and symbols within the block. The
symbols within any light block are responsible for total delay at most
$MD_{\text{avg}} $, since at most $D_{\text{avg}}$
symbols are delayed at most $M$. 
The symbols within in any heavy block are responsible for total delay at most $2M
D_{\text{avg}}$, since all but at most $D_{\text{avg}}$ 1s are
delayed only until the nearest integer multiple of $D_{\text{avg}}$,
and the rest are delayed at most $M$. As the bank contains at most
$2 D_{\text{avg}}$ 1s, banked symbols contribute at most $2
MD_{\text{avg}}$ total delays per block. The
adversary therfore spends at most $4 M D_{\text{avg}}$ total delays per
block, for an average delay of $4 D_{\text{avg}}$. To reduce this to  $D_{\text{avg}}$, 
we modify the above construction to use a block
length of $M/16$ micro-timesteps, decreasing the average delay appropriately
while increasing the rate by only a constant factor.

It remains to explain how to handle cases $c < 1$ and $k > M$. If $c <
1$, we simply decrease the block size further, from $M/16$ to
$Mc^2/16$. This decreases the average delay 
by a factor of $c$ and increases the rate by only a
constant factor.  For $k > M$, we note the adversary described above
never delivers more than $M$ 1s at any particular
micro-timestep. Thus, even if $k > M$, the 
the received string is the same as it would be if $k=M$.
\end{proof}

\section{Extensions and Alternative Models}
\label{sec:extensions}
%

\subsection{The Order of the Limits Matters}
Under the definition of capacity used in the sections above and in \cite{KS11},
$\limsup_{M \rightarrow \infty} \limsup_{T \rightarrow \infty}  \{k_{M, T}/T\}$,
the sender and receiver are not allowed to let the granularity of time grow with $T$. If we instead define the capacity to be
$\limsup_{T \rightarrow \infty} \limsup_{M \rightarrow \infty}  \{k_{M, T}/T\}$,
then the channel would behave very differently. 
Conceptually, the reason is that if $M$ is allowed to grow with $T$, the sender and
receiver can choose $M$ to be so much larger than $T$ that
a vast amount of information (relative to $T$) can be encoded in just the first macro-timestep, avoiding interference issues. 

To demonstrate one place where this interchange of limits alters the channel capacity, we show the $\mbox{AVG-}1$ channel behaves differently under this definition. 


\begin{theorem} If one interchanges the order of limits in the definition of channel capacity, then the capacity of
the $\mbox{AVG-1}$ channel with 
$D_{\text{avg}}=o(M)$ is infinite. \end{theorem}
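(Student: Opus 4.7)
The plan is to exhibit, for $T=1$, a family of valid codebooks whose size grows without bound as $M \to \infty$. This yields $\limsup_M s_{M,1} = \infty$, which immediately makes the swapped-limit capacity $\limsup_T \limsup_M s_{M,T}/T$ infinite. The intuition is exactly the one noted before the theorem: when $T$ is held fixed and $M$ is free to grow, the sender packs the entire coding scheme into the single macro-interval and arranges the codewords so that, even though the adversary's total delay budget $M D_{\text{avg}}$ is large, each codeword's 1s are anchored to a position too far from any other codeword's 1s to be confusable.

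The construction is a \emph{burst-position} code. Fix a small constant $c_0$ (say $c_0 = 1/4$), and set $N = \lfloor c_0 \sqrt{M/D_{\text{avg}}} \rfloor$ and $B = 4 N D_{\text{avg}}$; since $D_{\text{avg}} = o(M)$, we have $N \to \infty$. For each $y \in \{1, \ldots, N\}$, let the codeword $c_y \in \{0,1\}^M$ consist of $B$ consecutive 1s beginning at position $p_y := (y-1) \lfloor M/N \rfloor$, with 0s elsewhere. The choice of $c_0$ ensures $B \leq \lfloor M/N \rfloor$, so the bursts for distinct codewords occupy disjoint slots.

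To prove validity, suppose for contradiction that some $r$ lies in $B(c_y) \cap B(c_{y'})$ with $y < y'$. In the AVG-$1$ (OR) channel delays only move symbols forward, so every 1 in any string reachable from $c_{y'}$ sits at position $\geq p_{y'}$. Realizing $r$ from $c_y$ therefore forces each of $c_y$'s $B$ ones (initially at positions $p_y + i$, $0 \leq i < B$) to be delayed past $p_{y'}$, incurring total delay at least
\[
\sum_{i=0}^{B-1}(p_{y'} - p_y - i) = B(p_{y'} - p_y) - \binom{B}{2} \geq (4 - 8 c_0^2)\, M D_{\text{avg}},
\]
which for $c_0 < \sqrt{3/8}$ strictly exceeds the adversary's total budget $M D_{\text{avg}}$, a contradiction. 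Hence $B(c_y) \cap B(c_{y'}) = \emptyset$ for all $y \neq y'$, giving $s_{M,1} \geq \log_2 N \to \infty$.

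The main point to watch is the simultaneous satisfaction of two parameter constraints: the burst must be short enough to fit inside its slot (yielding $4 N^2 D_{\text{avg}} \leq M$) yet long enough that translating it by one full slot overspends the budget (requiring $B(p_{y'}-p_y) - \binom{B}{2} > M D_{\text{avg}}$). Both reduce to $N = \Theta(\sqrt{M/D_{\text{avg}}})$, so $D_{\text{avg}} = o(M)$ is exactly the regime in which the codebook grows unboundedly. The key structural observation driving the proof is that in the OR channel symbols can only move forward in time; this turns the ambiguity question into the clean arithmetic comparison above and is what lets the single-macro-interval strategy succeed under the swapped limits while failing under the original ones.
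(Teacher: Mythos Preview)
Your burst-position code and the paper's ``first-1-position'' code are the same idea: place the message in the location of a run of 1s inside a single macro-interval, and argue that sliding that run into the next slot costs more total delay than the adversary's budget. The paper uses a run of 1s extending to the end of the macro-interval with spacing $Mc'$ between start positions; you use a short burst of length $B$ with slot width $\lfloor M/N\rfloor$. Once the parameters are matched the arithmetic is essentially identical.

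There is, however, a real gap in your reduction to $T=1$. You assert that $\limsup_{M} s_{M,1}=\infty$ ``immediately'' gives $\limsup_{T}\limsup_{M} s_{M,T}/T=\infty$, but the outer $\limsup$ is over $T\to\infty$ and is blind to the value at $T=1$. Nor can you pad a $T=1$ codebook with zeros to get a valid duration-$T$ codebook: for duration $T$ the adversary's total budget is $MT\,D_{\text{avg}}$, and nothing stops it from spending all of that in the first macro-interval, so your cost lower bound of roughly $(4-8c_0^2)M D_{\text{avg}}$ no longer beats the budget once $T\ge 4$. The fix is easy and the paper (implicitly) does it: carry $T$ through the construction. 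For each fixed $T$, take $N=\Theta\bigl(\sqrt{M/(T D_{\text{avg}})}\bigr)$ and $B=\Theta(NT D_{\text{avg}})$ (equivalently, treat the $MT$ micro-intervals as one block and rerun your $T=1$ argument verbatim). Then $N\to\infty$ as $M\to\infty$ for each fixed $T$, so $\limsup_M s_{M,T}/T=\infty$ for every $T$, and the swapped-limit capacity is infinite.
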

\begin{proof} The idea is that the sender encodes $\omega(1)$ bits of information via the location of the \emph{first 1} in the entire codeword.
More formally,
suppose $D_{\text{avg}}=cM-1$ with $c=o(1)$, and let $c'=\sqrt{c/2}$. Assume for simplicity that $Mc'$ is an integer.
We will construct a valid codebook $\mathcal{C} \subseteq \{0, 1\}^{MT}$  with $|\mathcal{C}| = \Omega(1/c') = \omega(1)$
such that for each message $x \in \mathcal{C}$, the last $T-1$ macro-timesteps consist
only of 0s. Thus, we only specify the first macro-timestep in each codeword $x$.
In the first codeword, the first macro-timestep will simply be $Mc'$ 0s followed by $M-Mc'$ 1s.
In the second codeword, the first macro-timestep will be $2Mc'$ 0s followed by $M-2Mc'$ 1s.
In general, in the $i$th codeword, the first macro-timestep will be $iMc'$ 0s followed by $M-iMc'$ 1s.

The decoder will look at the position $L$ of the left-most 1 in the
received string and output the largest $i$ such that $iMc' \leq L$.

In order for the adversary to force the decoder
to decode incorrectly, the decoder has to make the first 1 appear at least $c' \cdot M$
positions later than it does in the sent string. For this to happen,
the adversary has to spend at least $1+ 2+ \cdots + Mc' \geq M^2c'^2/2$ delays in total.
So the average delay has to be at least $\frac{Mc'^2}{2T}=\frac{Mc}{4T}$.
For fixed $T$, this is $\Omega(M)$.  
\end{proof}

\subsection{Adding noise}

In this section we note that the \emph{combination} of interference with noise yields a max-bounded adversary that is surprisingly potent.

\begin{theorem} Suppose the adversary is allowed to flip $t$ bits per macro-timestep, and delay each bit a maximum of $D_{\text{max}}$ micro-timesteps. Then the capacity of the 1-interference channel
is finite if $D_{\text{max}} \cdot t = \Omega(M)$, and is infinite if $D_{\text{max}} \cdot t = o(M)$. In particular, the capacity is finite if $t=D_{\text{max}}=\Omega(\sqrt{M})$. \end{theorem}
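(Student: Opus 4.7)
The plan is to prove both directions by reducing the combined delay-plus-noise OR channel to a simpler, well-understood model, paralleling the proofs of Theorems~\ref{thm:max} and~\ref{thm:avg}.

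For the finite capacity direction, when $D_{\text{max}} \cdot t = \Omega(M)$, I would construct an adversary that first compresses the channel to a binary channel with adversarial errors. Following the strategy of the MAX-$k$ lower bound, the adversary partitions the sent string into blocks of length $B \leq D_{\text{max}}$ and delays every sent symbol to the last position of its block. Because the channel is $1$-interference (OR), each received block can then take only two values: $0$ if the block contained no $1$, and $1$ otherwise. After this delay pass, each macro-timestep has been reduced to a binary word of length $n = M/B$. The adversary then uses its $t$-per-macro-timestep flip budget to apply $t$ adversarial bit-flips to this binary word. Choosing $B = \Theta(M/t)$ (which is at most $D_{\text{max}}$ by hypothesis) makes $t \geq n/2$, and an application of the Plotkin bound then forces the per-macro-timestep codebook to contain only $O(1)$ codewords, giving $R(M) = O(1)$.

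For the infinite capacity direction, when $D_{\text{max}} \cdot t = o(M)$, I would construct a concatenated code in the spirit of Theorem~\ref{thm:avg}. The outer code is a classical constant-rate binary code correcting a small constant fraction of bit errors. The inner code encodes each outer bit into an inner block of length $L = 2D_{\text{max}} + 2$ micro-timesteps: a ``$0$'' is encoded by a single $1$ at position $0$ of the block, and a ``$1$'' by a single $1$ at position $D_{\text{max}} + 1$. The decoder outputs ``$1$'' if and only if the second half of the received inner block, namely positions $[D_{\text{max}}+1, 2D_{\text{max}}+1]$, contains any $1$. Since each delay shifts a bit rightward by at most $D_{\text{max}}$, the $1$ from a ``$0$''-encoding cannot leave the first half and the $1$ from a ``$1$''-encoding cannot leave the second half; by the same bound no bit leaks between adjacent inner blocks. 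Corrupting the decoded value of any inner block therefore forces the adversary to spend at least one flip on that block. Over $T$ macro-timesteps there are $MT/L = \Theta(MT/D_{\text{max}})$ inner blocks but only $tT$ flips available, so the adversary can corrupt at most an $O(tD_{\text{max}}/M) = o(1)$ fraction of inner blocks, which the outer decoder handles for sufficiently large $M$. Assuming $t \geq 1$ (so that $tD_{\text{max}} = o(M)$ forces $D_{\text{max}} = o(M)$), the resulting rate $\Theta(M/D_{\text{max}}) = \omega(1)$ demonstrates infinite capacity.

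The main obstacle, and what drives the inner-code choice above, is the asymmetry caused by the monotonicity of OR: delays cannot remove $1$s but they can collide many of them onto a single received position, so a single flip can in principle nullify many original $1$s at once. A naive inner code---encoding ``$0$'' as all-zeros and ``$1$'' as all-ones---would be defeated by a constant number of flips combined with aggressive delay collisions. Placing exactly one $1$ per block, in two positions separated by more than $D_{\text{max}}$, sidesteps this issue: there is no second $1$ in the same block to collide with, and no $1$ from a neighboring block can reach across the separating gap. A secondary bookkeeping subtlety in the finite direction is that the constant hidden in $\Omega(M)$ must be large enough for Plotkin's bound to kick in; this is tuned by adjusting the adversary's block length $B$ by a constant factor.
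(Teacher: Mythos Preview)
Your infinite-capacity argument is correct and close in spirit to the paper's: both use a concatenated code whose inner blocks carry at most one $1$, so that delays alone cannot corrupt a clean block and every corrupted inner block costs the adversary at least one flip. The paper's inner code is leaner---block length $D_{\text{max}}+1$, with a $0$ encoded as all zeros and a $1$ as a single leading $1$ followed by zeros, decoded by ``contains a $1$?''. Your stated worry about ``many $1$s collapsing under OR'' is already neutralized by any one-$1$-per-block encoding, so the extra $1$ you place in the $0$-encoding and the doubled block length are harmless but unnecessary.

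On the finite-capacity side your route genuinely differs, and hides a model-sensitive step. The paper does not reduce to a binary error channel and invoke Plotkin. Instead, after delaying every bit to the end of its length-$D_{\text{max}}$ block (so each block's received value is the OR of its sent bits), it simply spends at most one flip on each all-zero block to force it to value $1$. This maps \emph{every} codeword to the single fixed string with $1$s exactly at multiples of $D_{\text{max}}$, using at most $M/D_{\text{max}} \leq t$ flips per macro-timestep, all of them $0\to 1$. Your Plotkin step implicitly assumes the adversary can realize a $1\to 0$ flip on the \emph{compressed} word with a single actual bit-flip; that is immediate if noise acts on the received string, but if noise acts on the sent string a compressed $1\to 0$ may require flipping every $1$ in that block---up to $B$ actual flips---and your budget does not cover this. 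The paper's all-$0\to 1$ strategy sidesteps the ambiguity, and is exactly what your argument collapses to once you drop Plotkin and just map every compressed word to the all-$1$s string.
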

\begin{proof}
\textbf{Finite capacity regime.} Suppose for simplicity that $D_{\text{max}} \cdot t = M$. The key observation is that the adversary can turn \emph{any} macro-timestep into
the unique string consisting of all zeros, except for 1s at indices which are integer multiples of $D_{\text{max}}$, while staying within its budget.
The adversary breaks each macro-timestep into $t$ blocks of length $D_{\text{max}}$, and delays each
bit to the end of its block. If no 1s are sent in a
block, the adversary flips a single bit in the block to create a 1.
This totals at most $t$ bit-flips per macro-timestep, giving the result.


\medskip
\noindent \textbf{Infinite capacity regime.} Suppose $D_{\text{max}} \cdot t = o(M)$. We again use a concatenated code to construct
a valid codebook of size $2^{\Omega\left(\frac{MT}{D_{\text{max}}}\right)}$.
The sender starts with a string $x \in \{0, 1\}^{\frac{MT}{D_{\text{max}}+1}}$ encoded under a classical error-correcting code with constant rate which can tolerate up to a $1/5$ fraction
of adversarial errors.

The sender then replaces each bit of $x$ with a block of length
$D_{\text{max}}+1$: if $x_i=0$, the $i$'th block is set to the
all-zeros string, and if $x_i=1$, the $i$'th block is set to the
string consisting of a 1 followed by $D_{\text{max}}$ zeros. The
decoder decodes a block to 1 if the block contains at least one 1, and
decodes a block to 0 otherwise. It suffices to show that at
least a $4/5$ fraction of the blocks get decoded correctly.

Call a block \emph{dirty} if even a single bit within it is flipped, and call the block \emph{clean} otherwise.
Since the adversary can afford to flip only $T \cdot t = o(\frac{MT}{D_{\text{max}}})$ bits of the course of the entire message, only an $o(1)$ fraction of blocks are dirty.
For any clean block, the adversary cannot afford to delay the first bit in the block beyond the final bit in the block. Thus, any clean block will get
decoded correctly. It follows that decoding will always be successful, yielding the result.
\end{proof}

\section{Discussion}
We studied a variety of natural models for continuous-time channels
with delay errors in the presence of interference. Our results show
that these channels exhibit a clean threshold behavior.
We note our finite capacity results hold even for \emph{computationally simple} adversaries in the
sense of Guruswami and Smith \cite{GS10};  that is, our adversaries
process the sent string sequentially in linear time, using just $O(M)$ space.
Our results can be viewed as a counterweight to those of Khanna and Sudan \cite{KS11}, by showing that
other natural additional restrictions can lead to finite capacity in their
model.

Many questions remain for future work. 
Our results only address adversarial delays;  random delays
under our interference model remains open.  
One might also consider different models of
interference, or different limitations on the delays introduced by the
adversary.

\bibliographystyle{abbrv}
\bibliography{biblio}

\end{document}